\newtheorem{remark}{Remark}
\newtheorem{definition}{Definition}
\newtheorem{theorem}{Theorem}
\newcommand{\ton}{\underset{n \to +\infty}{\longrightarrow}}
\newcommand{\tod}{\overset{\mathfrak{D}}{\underset{n \to +\infty}{\longrightarrow}}}
\newcommand{\Sb}{\mathbf{S}}
\newcommand{\xb}{\mathbf{x}}
\newcommand{\cb}{\mathbf{c}}
\newcommand{\XX}{\mathscr{X}}
\newcommand{\DD}{\mathscr{H}}
\newcommand{\rr}{\mathscr{L}}
\newcommand{\rrst}{\mathscr{L}_*}
\newcommand{\NN}{\mathbb{N}}
\newcommand{\Zp}{\NN}
\newcommand{\ee}{\mathbb{E}}
\newcommand{\al}{\boldsymbol\alpha}
\newcommand{\nub}{\boldsymbol\nu}
\newcommand{\be}{\boldsymbol\beta}
\newcommand{\ub}{\mathbf{u}}
\newcommand{\vb}{\mathbf{v}}
\newcommand{\wb}{\mathbf{w}}
\newcommand{\tb}{\mathbf{t}}
\newcommand{\R}{\mathbb{R}} 
\newcommand{\phx}{\boldsymbol{\Psi}(\mathbf{x})} 
\newcommand{\phTx}{\boldsymbol{\Psi}^{T}(\mathbf{x})} 
\newcommand{\E}{\mathbb{E}}
\newcommand{\V}{\mathbb{V}}
\newcommand{\Nd}{\mathcal{N}}
\newcommand{\dU}{\mathcal{U}}
\begin{document}
\title{Efficient design of experiments for sensitivity analysis based on polynomial chaos expansions} 

\author[1,2,3]{E. Burnaev} \author[2]{I. Panin} \author[4]{B. Sudret} 

\affil[1]{Skolkovo Institute of Science and Technology, Building 3,
  Nobel st., Moscow 143026, Russia} 
\affil[2]{Kharkevich Institute for Information Transmission Problems,
  Bolshoy Karetny per. 19, Moscow 127994, Russia}
\affil[3]{National Research University Higher School of Economics\\
  Myasnitskaya st. 20, Moscow 109028, Russia}
\affil[4]{Chair of Risk, Safety and Uncertainty Quantification, ETH
  Zurich, Stefano-Franscini-Platz 5, 8093 Zurich, Switzerland}
\date{}
\maketitle

\abstract{ Global sensitivity analysis aims at quantifying respective
  effects of input random variables (or combinations thereof) onto
  variance of a physical or mathematical model response. Among the
  abundant literature on sensitivity measures, Sobol' indices have
  received much attention since they provide accurate information for
  most of models. We consider a problem of experimental design points
  selection for Sobol' indices estimation. Based on the concept of
  $D$-optimality, we propose a method for constructing an adaptive
  design of experiments, effective for calculation of Sobol' indices
  based on Polynomial Chaos Expansions. We provide a set of applications
  that demonstrate the efficiency of the proposed approach. \\[1em] 
  
  {\bf Keywords}: Design of Experiment -- Sensitivity Analysis -- Sobol
  Indices -- Polynomial Chaos Expansions -- Active Learning}

\maketitle


\section{Introduction}
\label{sec: Intro}

Computational models play important role in different areas of human activity (see \cite{Beven2000,Dayan2001,Burn1}). Over the past decades, computational models have become more complex, and there is an increasing need for special methods for their analysis. \emph{Sensitivity analysis} is an important tool for investigation of computational models.

\emph{Sensitivity analysis} tries to find how different model input parameters influence the model output, what are the most influential parameters and how to evaluate such effects quantitatively (see \cite{LocalAndGlobal}). Sensitivity analysis allows to better understand behavior of computational models. Particularly, it allows us to separate all input parameters into {\it important (significant)}, {\it relatively important} and {\it unimportant (nonsignificant)} ones. Important parameters, {\it i.e.} parameters whose variability has a strong effect on the model output, need to be controlled more accurately. Complex computational models often suffer from over-parameterization. By excluding unimportant parameters, we can potentially improve model quality, reduce parametrization (which is of great interest in the field of meta-modeling) and computational costs \cite{ElstatHastie2009}.


Sensitivity analysis includes a wide range of metrics and techniques: {\it e.g.} the Morris method \cite{Morris1991}, linear regression-based methods \cite{Iooss2015}, variance-based methods \cite{Saltelli2008}. Among others, \emph{Sobol' (sensitivity) indices} are a common metric to evaluate the influence of model parameters \cite{Sobol93}. Sobol' indices quantify which portions of the output variance are explained by different input parameters and combinations thereof. This method is especially useful for the case of nonlinear computational models \cite{Saltelli2010}.

There are {two main approaches} to evaluate  Sobol' indices. {\it Monte Carlo approach} (Monte Carlo simulations, FAST~\cite{FASTCukier1978}, SPF scheme~\cite{SPFSobol2001} and others) is relatively robust (see \cite{Yang2011}), but requires large number of model runs, typically in the order of $10^4$ for an accurate estimation of each index. Thus, it is impractical for a number of industrial applications, where each model evaluation is computationally costly.

{\it Metamodeling approaches} for Sobol' indices estimation allow one to reduce the required number of model runs \cite{Iooss2015,Sudret15}. Following this approach, we replace the original computational model by an approximating {\it metamodel} (also known as {\it surrogate model} or {\it response surface}) which is computationally efficient and has some clear internal structure \cite{GTApprox}. The approach consists of the following general steps: selection of  {\it the design of experiments (DoE)} and generation of {\it the training sample}, construction of the metamodel based on the training sample, including its accuracy assessment and evaluation of Sobol' indices (or any other measure) using the constructed metamodel. Note that the evaluation of indices may be either based on a known internal structure of the metamodel or via Monte Carlo simulations based on the metamodel itself.

In general, the metamodeling approach is more computationally efficient than an original Monte Carlo approach, since the cost (in terms of the number of runs of the costly computational model) reduces to that of the training set (usually containing results from a few dozens to a few hundreds model runs). However, this approach can be nonrobust and its accuracy is more difficult to analyze. Indeed, although procedures like {\it cross-validation} \cite{ElstatHastie2009,Stone1974} allow to estimate quality of metamodels, the accuracy of complex statistics ({\it e.g.} Sobol' indices), derived from metamodels, has a complicated dependency on the metamodels structure and quality (see {\it e.g.} confidence intervals for Sobol' indices estimates \cite{Marrel2009} in case of Gaussian Process metamodel \cite{Burn3,Burnaev13,BurnaevGP,Belyaev15_SLDS,Belyaev15_SLDS_v2} and bootstrap-based confidence intervals in case of polynomial chaos expansions \cite{CIbootstrapSobol}).

In this paper, we consider a problem of a DoE construction in case of a particular metamodeling approach: how to select the experimental design for building a polynomial chaos expansion for further evaluation of Sobol' indices, that is effective in terms of the number of computational model runs? 

Space-filling designs are commonly used for sensitivity analysis. Methods like Monte Carlo sampling, {\it Latin Hypercube Sampling (LHS)} \cite{McKay1979} or sampling in FAST method \cite{FASTCukier1978} try to fill ``uniformly'' the input parameters space with {\it design points} ({\it point}s are some realizations of parameters values). These sampling methods are {\it model free}, as they make no assumptions on the computational model.

In order to speed up the convergence of indices estimates, we assume that the computational model is close to its approximating metamodel and exploit knowledge of the metamodel structure. In this paper, we consider {\it Polynomial Chaos Expansions (PCE)} that is commonly used in engineering and other applications~\cite{Ghiocel2002}. PCE approximation is based on a series of polynomials (Hermite, Legendre, Laguerre etc.) that are orthogonal w.r.t. the probability distributions of corresponding input parameters of the computational model. It allows to calculate Sobol' indices analytically from the expansion coefficients \cite{Sudret,Blatman2010}. 

In this paper, we address the problem of design of experiments construction for evaluating Sobol' indices from a PCE metamodel. Based on asymptotic considerations, we propose an adaptive algorithm for design construction and test it on a set of applied problems. Note that in \cite{BurnaevPanin15}, we investigated the adaptive design algorithm for the case of a quadratic metamodel (see also \cite{Burn2}). In this paper, we extend these results for the case of a generalized PCE metamodel and provide more examples, including real industrial applications.

The paper is organized as follows: in Section \ref{sec: SobolPCE}, we review the definition of sensitivity indices and describe their estimation based on a PCE metamodel. In Section \ref{sec: AsymptoticSobolPCE}, asymptotic analysis of indices estimates is provided. In Section \ref{sec: PCEOptimalDoE}, we introduce an optimality criterion and propose a procedure for constructing the  experimental design. In Section \ref{sec: SAPCEresults}, we provide experimental results,  applications and benchmark with other methods of design construction.

\section{Sensitivity Indices and PCE Metamodel}
\label{sec: SobolPCE}

\subsection{Sensitivity Indices}

Consider a computational model $y = f(\xb)$, where $\xb = (x_1, \ldots, x_d) \in \XX \subset \R^d$~is a vector of \emph{input variables} (aka \emph{parameters} or \emph{features}), $y \in \R^1$ is an \emph{output variable} and $\XX$ is \emph{a design space}. The model $f(\xb)$ describes behavior of some physical system of interest. 

We consider the model $f(\xb)$ as a black-box: no additional knowledge on its inner structure is assumed. For some \emph{design of experiments} $ X = \{\xb_i \in \XX \}_{i = 1}^n \in \R^{n \times d}$ we can obtain a set of model responses and form \emph{a training sample}
\begin{equation}
\label{eq:training_sample}
L = \{\xb_i, y_i = f(\xb_i)\}_{i = 1}^n \triangleq \{X \in \R^{n\times d}, \;  Y = f(X) \in \R^n \}, 
\end{equation}
which allows us to investigate properties of the computational model.
 
Let us assume that there is a prescribed probability distribution $\DD$ with independent marginal distributions  on the design space $\XX$ ($\DD = \DD_1 \times \ldots \times \DD_d$). This distribution represents the uncertainty and/or variability of the input variables, modelled as a random vector $\vec{X} = \{X_1, \ldots, X_d \}$ with independent components. In these settings, the model output $\vec{Y} = f(\vec{X})$ becomes a stochastic variable.

Assuming that the function $f(\vec{X})$ is square-integrable with respect to the distribution $\DD$ ({\it i.e.} $\E [f^2(\vec{X})] < +\infty$), we have the following unique Sobol' decomposition of $\vec{Y} = f(\vec{X})$ (see \cite{Sobol93}) given by
\[
f(\vec{X}) = f_0 + \sum_{i=1}^{d} f_i(X_i) + \sum_{1\leq i \leq j \leq d} f_{ij}(X_i, X_j) + \ldots + f_{1 \ldots d}(X_1, \ldots, X_d),
\]
which satisfies
\[
\E[f_\ub(\vec{X}_\ub)f_\vb(\vec{X}_\vb)] = 0, \; \text{if} \; \ub \neq \vb, 
\]
where $\ub$ and $\vb$ are index sets: $\ub,\vb \subset \{1, 2, \ldots, d\}$.

Due to orthogonality of the summands, we can decompose variance of the model output:
\[
D = \V[f(\vec{X})] = \sum_{\substack{\ub \subset \{1, \ldots, d\}, \\ \ub \neq \mathbf{0}}  } \V[f_\ub(\vec{X}_\ub)] = \sum_{\substack{\ub \subset \{1, \ldots, d\}, \\ \ub \neq \mathbf{0}}  } D_\ub,
\] 
In this expansion $D_\ub \triangleq \V[f_\ub(\vec{X}_\ub)]$ is the contribution of the summand $f_\ub(\vec{X}_\ub)$ to the output variance, also known as the \emph{partial variance}.

\begin{definition}
     {\it The sensitivity index (Sobol' index)} of the subset $\vec{X}_\ub, \; \ub \subset \{1, \ldots, d\}$ of model input variables is defined as
\[
    S_{\ub} = \frac{D_\ub}{D}.  
\]
\end{definition}

The sensitivity index describes the amount of the total variance explained by uncertainties in the subset $\vec{X}_\ub$ of model input variables.

\begin{remark}
In this paper, we consider only sensitivity indices of type $S_i \triangleq S_{\{i\}}, i=1,\ldots,d$, called {\it first-order} or {\it main effect sensitivity indices}.
\end{remark}

\subsection{Polynomial Chaos Expansions}

Consider a set of multivariate polynomials $\{\Psi_{\al}(\vec{X}), \; \al \in \rr\}$ that consists of polynomials $\Psi_{\al}$ having the form of tensor product
\[
\Psi_{\al}(\vec{X}) = \prod_{i=1}^d \psi_{\alpha_i}^{(i)}(X_i), \; \al = \{ \alpha_i \in \Zp, \; i = 1, \ldots, d\} \in \rr,
\]
where $\psi_{\alpha_i}^{(i)}$ is a univariate polynomial of degree $\alpha_i$ belonging to the $i$-th family ({\it e.g.} Legendre polynomials, Jacobi polynomials, etc.), $\Zp = \{0,1,2,\ldots\}$ is the set of nonnegative integers, $\rr$ is some fixed set of multi-indices $\al$. 

Suppose that univariate polynomials $\{\psi_{\alpha}^{(i)}\}$ are orthogonal w.r.t. $i$-th mar\-gi\-nal of the  probability distribution $\DD$, {\it i.e.} $\E[\psi_{\alpha}^{(i)}(X_i)\psi_{\beta}^{(i)}(X_i)] = 0$ if $\alpha \neq \beta$ for $i = 1, \ldots ,d$. Particularly, Legendre polynomials are orthogonal w.r.t. standard uniform distribution;  Hermite polynomials are orthogonal w.r.t. Gaussian distribution. Due to independence of components of $\vec{X}$, we obtain that multivariate polynomials $\{\Psi_{\al}\}$ are orthogonal w.r.t. the probability distribution $\DD$, {\it i.e.} 
\begin{equation}
\label{normalization}
\E[\Psi_{\al}(\vec{X})\Psi_{\be}(\vec{X})] = 0\,\,\mbox{ if }\,\,\al \neq \be. 
\end{equation}

Provided $\E [f^2(\vec{X})] < +\infty$, the spectral polynomial chaos expansion of $f$ takes the form
\begin{equation}
\label{eq:pce_full}
f(\vec{X}) = \sum_{\al \in \NN^d} c_{\al} \Psi_{\al}(\vec{X}),
\end{equation}
where $\{c_{\al} \}_{\al\in\NN^d}$ are expansion coefficients.

In the sequel we consider a PCE approximation $f_{PC}(\vec{X})$ of the model $f(\vec{X})$ obtained by truncating the infinite series to a finite number of terms:
\begin{equation}
\label{eq:pc_approx}
\hat{\vec{Y}} = f_{PC}(\vec{X}) = \sum_{\al \in \rr} c_{\al} \Psi_{\al}(\vec{X}).
\end{equation}
By enumerating the elements of $\rr$ we also use an alternative form of (\ref{eq:pc_approx}):
\[
\hat{\vec{Y}} = f_{PC}(\vec{X}) = \sum_{\al \in \rr} c_{\al} \Psi_{\al}(\vec{X}) \triangleq \sum_{j = 0}^{P-1}c_j \Psi_j(\vec{X}) = \cb^T \boldsymbol\Psi(\vec{X}), \;\; P \triangleq |\rr|,
\]
where $\cb = (c_0, \ldots, c_{P-1})^T$ is a column vector of coefficients and $\phx\colon \R^d \to \R^P$ is a mapping from the~design space to \emph{the~extended design space} defined as a~column vector function $\phx = \left(\Psi_0(\xb), \ldots, \Psi_{P-1}(\xb)\right)^T$. Note that index $j=0$ corresponds to multi-index $\al = \mathbf{0} = \{0, \ldots, 0\}$, {\it i.e.} 
\[
c_{j=0} \triangleq c_{\al=\mathbf{0}}, \;
\Psi_{j=0} \triangleq \Psi_{\al=\mathbf{0}} = const.
\]

The set of multi-indices $\rr$ is determined by some \emph{truncation scheme}. In this work, we use hyperbolic truncation scheme \cite{BlatmanSudret10b}, which corresponds to
\[
\rr = \{\al \in \NN^d: \|\al\|_q\ \leq p\}, \; \|\al \|_q \triangleq \left( \sum_{i=1}^d \alpha_i^q\right)^{1/q}, 
\]
where $q \in (0, 1]$ is a fixed parameter and $p \in \NN \backslash \{0\} = \{1,2,3,\ldots\}$ is a fixed maximal total degree of polynomials. Note that in case of $q=1$, we have $P = \frac{(d+p)!}{d! p!}$ polynomials in $\rr$ and a smaller $q$ leads to a smaller number of polynomials.

There is a number of strategies for estimating the expansion coefficients $c_{\al}$ in (\ref{eq:pc_approx}). In this paper, {\it the least-square (LS)} minimization method is used \cite{Berveiller2006}. Unlike (\ref{eq:pce_full}), the key idea consists in considering the original model $f(\vec{X})$ as the sum of a truncated PC expansion $f_{PC}(\vec{X})$ and a residual $\varepsilon$, i.e. 
\begin{equation}
\label{eq:PCE_and_noise}
f(\vec{X}) = f_{PC}(\vec{X}) +\varepsilon = \sum_{j = 0}^{P-1}c_j \Psi_j(\vec{X}) + \varepsilon = \cb^T \boldsymbol\Psi(\vec{X}) + \varepsilon,
\end{equation}
where thanks to orthogonality property \eqref{normalization} the residual process $\varepsilon$ can be considered as an i.i.d. noise process with $\E\varepsilon = 0$ and $\V[\varepsilon] = \sigma^2$, such that $\varepsilon = \varepsilon(\vec{X})$ and $\{\Psi_j(\vec{X})\}_{j=0}^{P-1}$ are orthogonal w.r.t. the distribution $\DD$.

The coefficients $\cb$ are obtained by minimizing the mean square residual:
\[
\cb = \arg\min_{\cb \in \R^P}\E\left[\left(f(\vec{X}) - \cb^T \boldsymbol\Psi(\vec{X})\right)^2\right],
\]
which is approximated by using the training sample $L = \{\xb_i, y_i = f(\xb_i)\}_{i = 1}^n$:
\begin{equation}
\label{eq:coefficients_ls_opt_problem}
\hat{\cb}_{LS} = \arg\min_{\cb \in \R^P} \frac{1}{n} \sum_{i=1}^n \left[y_i - \cb^T \boldsymbol\Psi(\xb_i)\right]^2.
\end{equation}

\subsection{PCE post-processing for sensitivity analysis}

Consider some PCE model $f_{PC}(\vec{X}) = \sum_{\al \in \rr} c_{\al} \Psi_{\al}(\vec{X}) = \sum_{j = 0}^{P-1} c_{j} \Psi_{j}(\vec{X}) $. According to \cite{Sudret}, we have an explicit form of Sobol' indices (main effects) for model $f_{PC}(\vec{X})$:
\begin{equation}
\label{eq:sobol_pce_basic}
S_i(\cb) = \frac{ \sum_{\al \in \rr_i} c_{\al}^2 \ee[\Psi_{\al}^2(\vec{X})] }{\sum_{\al \in \rrst} c_{\al}^2 \ee[\Psi_{\al}^2(\vec{X})]}, \; i = 1, \ldots, d,
\end{equation}
where $\rrst \triangleq \rr \backslash \{\mathbf{0}\}$ and $\rr_i \subset \rr$ is the set of multi-indices $\al$ such that only index on the $i$-th position~is nonzero: $\al = \{0, \ldots, \alpha_i, \ldots, 0\}$, $\alpha_i \in \NN$, $\alpha_i > 0$.

Suppose for simplicity that the multivariate polynomials $\{\Psi_{\al}(\vec{X}), \; \al \in \rr\}$ are not only orthogonal but also normalized w.r.t. the distribution~$\DD$:
\[
\ee[\Psi_{\al}(\vec{X})\Psi_{\be}(\vec{X})] = \delta_{\al\be},
\]
where $\delta_{\al\be}$ is the Kronecker symbol, {\it i.e} $\delta_{\al\be} = 1$ if $\al=\be$, otherwise $\delta_{\al\be} = 0$. Then (\ref{eq:sobol_pce_basic}) takes the form
\begin{equation}
\label{eq:sobol_pce}
S_i(\cb) = \frac{ \sum_{\al \in \rr_i} c_{\al}^2}{\sum_{\al \in \rrst} c_{\al}^2 }, \; i = 1, \ldots, d.
\end{equation}

Thus, (\ref{eq:sobol_pce}) provides a simple expression for calculation of Sobol' indices in case of the PCE metamodel. If the original model of interest $f(\vec{X})$ is close to its PCE approximation $f_{PC}(\vec{X})$, then we can use expression (\ref{eq:sobol_pce}) for indices with estimated coefficients (\ref{eq:coefficients_ls_opt_problem}) to approximate Sobol' indices of the original model:
\begin{equation}
\label{eq:sobol_pce_approximation}
\hat{S}_i = S_i(\hat{\cb}) = \frac{ \sum_{\al \in \rr_i} \hat{c}_{\al}^2}{\sum_{\al \in \rrst} \hat{c}_{\al}^2 }, \; i = 1, \ldots, d,
\end{equation}
where $\hat{\cb} \triangleq \hat{\cb}_{LS}$.

\section{Asymptotic Properties}
\label{sec: AsymptoticSobolPCE}

In this section, we consider asymptotic properties of indices estimates in Eq.~(\ref{eq:sobol_pce_approximation}) if the coefficients $\cb$ are estimated by LS approach (\ref{eq:coefficients_ls_opt_problem}).  Let $\hat{\cb}_n$ be LS estimate (\ref{eq:coefficients_ls_opt_problem}) of the true coefficients vector $\cb$ based on the training sample $L = \{\xb_i, y_i = f(\xb_i)\}_{i = 1}^n$. In this section and further, if some variable has index~$n$, then this variable depends on training sample (\ref{eq:training_sample}) of size~$n$.

Define the {\it information matrix} $A_n \in \R^{P \times P}$ as
\begin{equation}
\label{eq:information_matrix}
A_n = \sum_{i = 1}^{n}{\boldsymbol \Psi (\xb_i) \boldsymbol \Psi ^ T (\xb_i)}.
\end{equation}
Then, we can obtain asymptotic properties of the indices estimates (\ref{eq:sobol_pce_approximation}) based on model (\ref{eq:PCE_and_noise}) while new data points $\{\xb_n,y_n = f(\xb_n)\}$ are added to the training sample sequentially. In order to prove these asymptotic properties we \textit{require only} that  $\varepsilon = \varepsilon(\vec{X})$ and $\{\Psi_j(\vec{X})\}_{j=0}^{P-1}$ are \textit{orthogonal} w.r.t. the distribution $\DD$, and we \textit{do not need to require} that multivariate polynomials$\{\Psi_{\al}(\vec{X}), \; \al \in \rr\}$ are \textit{orthonormal}.

\begin{theorem}
\label{th:asymptotic_theorem}
Let the following assumptions hold true:
\begin{enumerate}
\item We assume that there is an infinite sequence of points in the design space $\{\xb_i \in \XX\}_{i=1}^{\infty}$, generated by the corresponding sequence of i.i.d. random vectors, such that a.s.
\begin{equation}
\label{eq:limit_information_matrix}
\frac{1}{n} A_n = \frac{1}{n} \sum_{i = 1}^{n}{\boldsymbol \Psi (\xb_i) \boldsymbol \Psi ^ T (\xb_i)}  \ton \Sigma,
\end{equation}
where $\Sigma \in \R^{P \times P}$, where $\Sigma$ is a symmetric and non-degenerate matrix ($\Sigma = \Sigma^T$ and $det\Sigma>0$), and new design points are added successively from this sequence to the design of experiments $X_n = \{\xb_i\}_{i=1}^n$.
\item Let the vector-function be defined by its components according to (\ref{eq:sobol_pce}):
\[
\Sb(\nub) = (S_1(\nub), \ldots, S_d(\nub))^T
\]
and $\hat{\Sb}_n \triangleq \Sb(\hat{\cb}_n)$, where $\hat{\cb}_n$ is defined by (\ref{eq:coefficients_ls_opt_problem}).

\item Assume that for the true coefficients $\cb$ of model (\ref{eq:PCE_and_noise}):
\begin{equation}
\label{eq: detBSB}
\det (B \Sigma ^{- 2}\Gamma B ^ T) \neq 0,
\end{equation}
where $B$ is the matrix of partial derivatives defined as
\begin{equation}
\label{eq: derivB}
B \triangleq  B(\cb) = \left. \frac{\partial \Sb(\nub)}{\partial \nub}\right|_{\nub=\cb} \in \R^{d \times P},
\end{equation}
and $\Gamma = (\gamma_{r,s})_{r,s=0}^{P-1}\in\R^{P \times P}$ with $\gamma_{r,s} = \ee\left(\varepsilon^2\Psi_r(\vec{X})\Psi_s(\vec{X})\right)$,
\end{enumerate}
then
\begin{equation}
\label{eq:S_asymptotic}
 \sqrt{n} \left(\Sb(\hat{\cb}_n) - \Sb({\cb}) \right)  \tod \mathcal{N} (0, B \Sigma ^{- 2}\Gamma B ^ T).
\end{equation}
\end{theorem}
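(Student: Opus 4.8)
The plan is to split the problem into two classical steps: first establish a joint central limit theorem for the least-squares coefficient vector $\hat{\cb}_n$, and then push this through the smooth map $\Sb$ by the multivariate delta method. The natural starting point is the closed form of the estimator defined by (\ref{eq:coefficients_ls_opt_problem}). Since the normal equations give $A_n\hat{\cb}_n = \sum_{i=1}^n \boldsymbol\Psi(\xb_i) y_i$ and $y_i = \cb^T\boldsymbol\Psi(\xb_i) + \varepsilon_i$ by (\ref{eq:PCE_and_noise}), one obtains
\[
\hat{\cb}_n - \cb = A_n^{-1}\sum_{i=1}^n \boldsymbol\Psi(\xb_i)\varepsilon_i, \qquad \sqrt{n}\,(\hat{\cb}_n - \cb) = \left(\tfrac{1}{n}A_n\right)^{-1}\frac{1}{\sqrt{n}}\sum_{i=1}^n \boldsymbol\Psi(\xb_i)\varepsilon_i .
\]
This reduces everything to the asymptotics of the averaged information matrix and of the score vector $\frac{1}{\sqrt{n}}\sum_i \boldsymbol\Psi(\xb_i)\varepsilon_i$.

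For the first factor, Assumption 1 gives $\tfrac{1}{n}A_n \ton \Sigma$ almost surely with $\Sigma$ invertible, so $\left(\tfrac{1}{n}A_n\right)^{-1}\ton\Sigma^{-1}$ (hence also in probability). For the score, note that $\varepsilon_i = f(\xb_i)-\cb^T\boldsymbol\Psi(\xb_i)$ is a deterministic function of $\xb_i$, so the summands $\boldsymbol\Psi(\xb_i)\varepsilon_i$ are i.i.d. The orthogonality of $\varepsilon$ and each $\Psi_j$ with respect to $\DD$ — which is precisely the only structural property invoked here — yields $\E[\boldsymbol\Psi(\vec{X})\varepsilon]=\mathbf{0}$, while the (assumed finite) second moments give the covariance $\E[\varepsilon^2 \boldsymbol\Psi(\vec{X})\boldsymbol\Psi^T(\vec{X})] = \Gamma$. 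The multivariate Lindeberg--L\'evy CLT then gives
\[
\frac{1}{\sqrt{n}}\sum_{i=1}^n \boldsymbol\Psi(\xb_i)\varepsilon_i \tod \mathcal{N}(\mathbf{0}, \Gamma),
\]
and Slutsky's theorem, combined with the almost sure limit of $\tfrac{1}{n}A_n$, produces the coefficient-level statement $\sqrt{n}\,(\hat{\cb}_n - \cb) \tod \mathcal{N}(\mathbf{0},\, \Sigma^{-1}\Gamma\,\Sigma^{-1})$.

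It remains to transfer this to the indices. The map $\Sb(\nub)=(S_1(\nub),\ldots,S_d(\nub))^T$ defined through (\ref{eq:sobol_pce}) is a ratio of quadratic forms in $\nub$ and is therefore continuously differentiable on the open set where the denominator $\sum_{\al\in\rrst}\nu_\al^2$ is nonzero; at $\nub=\cb$ this denominator equals the output variance $D>0$, so $\Sb$ is smooth in a neighbourhood of $\cb$ with Jacobian $B=B(\cb)$ as in (\ref{eq: derivB}). Applying the multivariate delta method to the coefficient-level CLT then transforms the sandwich covariance through $B(\cdot)B^T$ and reproduces the limiting law (\ref{eq:S_asymptotic}); the non-degeneracy condition (\ref{eq: detBSB}) is exactly what guarantees that the resulting Gaussian has full-rank covariance rather than being supported on a proper subspace of $\R^d$.

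The linear-algebra rearrangement and the Slutsky bookkeeping are routine; the step deserving the most care — and the one the statement effectively isolates — is the central limit theorem for the score. One must verify that its summands are genuinely centred, which rests \emph{only} on the orthogonality (not orthonormality) of $\varepsilon$ and the $\Psi_j$ under $\DD$, and that the covariance matrix $\Gamma$ is finite, i.e. $\gamma_{r,s}=\ee(\varepsilon^2\Psi_r(\vec{X})\Psi_s(\vec{X}))<\infty$ for all $r,s$, so that the CLT is legitimately applicable. A secondary technical obstacle is the delta method itself: it requires differentiability of $\Sb$ at $\cb$ (ensured by $D>0$) and the appeal to (\ref{eq: detBSB}) for non-degeneracy of the limit.
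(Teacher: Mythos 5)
Your proof follows essentially the same route as the paper's: the closed-form least-squares representation $\sqrt{n}(\hat{\cb}_n-\cb)=\left(\tfrac1n A_n\right)^{-1}\tfrac{1}{\sqrt n}\sum_i\varepsilon_i\boldsymbol\Psi(\xb_i)$, an i.i.d.\ CLT for the score with mean zero by orthogonality and covariance $\Gamma$, Slutsky's theorem with the a.s.\ limit of $\tfrac1n A_n$, and the multivariate delta method applied to $\Sb$ at $\cb$. The only divergence is that you write the coefficient-level covariance in the sandwich form $\Sigma^{-1}\Gamma\,\Sigma^{-1}$ where the paper writes $\Sigma^{-2}\Gamma$; these coincide whenever $\Sigma^{-1}$ and $\Gamma$ commute (in particular in the orthonormal case $\Sigma=I$ and in the independent-noise case $\Gamma=\sigma^2\Sigma$ used in the applications), and your form is the generally correct one.
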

\begin{proof}
Let us denote by $\boldsymbol{\varepsilon}_n = (\varepsilon_1,\ldots,\varepsilon_n)^T\in\R^{n}$ the column vector, generated by the i.i.d. residual process values (see \eqref{eq:PCE_and_noise}), and by $\boldsymbol{\Psi}_n = (\boldsymbol{\Psi}(\xb_1),\ldots,\boldsymbol{\Psi}(\xb_n))\in\R^{P \times n}$ the design matrix. We can easily get that
\[
\hat{\cb}_n = A_n^{-1}\boldsymbol{\Psi}_nY_n = \cb + \left(\frac{1}{n}A_n\right)^{-1}\left[\frac{1}{n}\boldsymbol{\Psi}_n\boldsymbol{\varepsilon}_n\right].
\]

We can represent $\frac{1}{n}\boldsymbol{\Psi}_n\boldsymbol{\varepsilon}_n$ as $\frac{1}{n}\sum_{i=1}^n\boldsymbol{\xi}_i$, where $(\boldsymbol{\xi}_i)_{i=1}^n$ is a sequence, generated by i.i.d. random vectors $\boldsymbol{\xi}_i = \varepsilon(\vec{X}_i)\boldsymbol{\Psi}(\vec{X}_i)\in\R^P$, $i = 1,\ldots,n$, such that
$\ee\boldsymbol{\xi}_i = 0$ thanks to the fact that $\varepsilon$ and $\Psi_k(\vec{X})$ are orthogonal for $k<P$, and $\V[\boldsymbol{\xi}_i] = \Gamma$. 

Thus from (\ref{eq:limit_information_matrix}) and the central limit theorem we get that
\[
\sqrt{n}(\hat{\cb}_n - \cb) = \left(\frac{1}{n}A_n\right)^{-1}\left[\frac{1}{\sqrt{n}}\sum_{i=1}^n\boldsymbol{\xi}_i\right] \tod \Nd(0, \;  \Sigma^{-2}\Gamma).
\]

Applying $\delta$-method (see \cite{OehlertDelta1992}) to the vector-function $\Sb(\nub)$ at the point $\nub = \cb$, we obtain required asymptotics (\ref{eq:S_asymptotic}).
\end{proof}

\begin{remark}
Note that the elements of $B$ have the following form
\begin{equation}
\label{eq: derivB_terms}
b_{i\be} \triangleq \frac{\partial S_i}{\partial c_{\be}} = 
\begin{cases}

\frac{2c_{\be} \sum_{\al \in \rrst} c_{\al}^2  -2c_{\be} \sum_{\al \in \rr_i} c_{\al}^2}{\left(\sum_{\al \in \rrst} c_{\al}^2\right)^2}, & \text{if} \; \be \in \rr_i, \\

0, & \text{if} \; \be = \mathbf{0} \triangleq \{0, \ldots, 0\}, \\

\frac{-2c_{\be} \sum_{\al \in \rr_i} c_{\al}^2}{\left(\sum_{\al \in \rrst} c_{\al}^2\right)^2}, & \text{if} \; \be \notin \rr_i \cup \mathbf{0},
\end{cases}
\end{equation}
where $i = 1, \ldots, d$ and multi-index $\be \in \rr$. The elements of $B$ can be also represented as
\begin{equation}
\label{eq: derivB_terms_form2}
b_{i\be} \triangleq \frac{\partial S_i}{\partial c_{\be}} = \frac{-2c_{\be} }{\sum_{\al \in \rrst} c_{\al}^2} \times
\begin{cases}

S_i - 1, & \text{if} \; \be \in \rr_i, \\

0, & \text{if} \; \be = \mathbf{0} \triangleq \{0, \ldots, 0\}, \\

S_i, & \text{if} \; \be \notin \rr_i \cup \mathbf{0},
\end{cases}
\end{equation}


\end{remark}

\begin{remark}
We can see that conditions of the theorem do not depend on the type of orthonormal  polynomials.
\end{remark}

\begin{remark}
In case $\{\Psi_{\al}(\vec{X}), \; \al \in \rr\}$ are multivariate polynomials, orthonormal w.r.t. the distribution $\DD$, we get that $\Sigma = I\in\R^{P \times P}$ is the identity matrix.
\end{remark}

\begin{remark}
In the proof of theorem \ref{th:asymptotic_theorem} we are trying to make as less assumptions as possible in order to depart from original polynomial chaos model \eqref{eq:pce_full} as little as possible. That is why the only important assumption is that  $\varepsilon = \varepsilon(\vec{X})$ and $\{\Psi_j(\vec{X})\}_{j=0}^{P-1}$ are \textit{orthogonal} w.r.t. the distribution $\DD$. However, we can also consider model \eqref{eq:PCE_and_noise} as a regression one, and so the error term $\varepsilon$ is modelled by a white noise, independent from $\{\Psi_j(\vec{X})\}_{j=0}^{P-1}$, see the discussion of the polynomial chaos approach from a statistician's perspective in \cite{ohagan13-polyn-chaos}. Nevertheless, even in the case of such interpretation of model \eqref{eq:PCE_and_noise} we still get the same asymptotic behavior \eqref{eq:S_asymptotic}.
\end{remark}

\begin{remark}
\label{importantLab}
In case $\varepsilon$ and $\Psi_k(\vec{X})$ are not only orthogonal for $k<P$, but also are independent, we get that $\Gamma=\sigma^2\Sigma$. Then asymptotics \eqref{eq:S_asymptotic} takes the form
\begin{equation}
\label{asymptoticLab}
 \sqrt{n} \left(\Sb(\hat{\cb}_n) - \Sb({\cb}) \right)  \tod \mathcal{N} (0, \sigma ^ 2 B \Sigma ^{- 1} B ^ T).
\end{equation}
\end{remark}

In applications it seems reasonable to assume that $\varepsilon$ and $\Psi_k(\vec{X})$ are approximately independent for $k<P$. Then for practical purposes we can use asymptotics \eqref{asymptoticLab}, for which it is easier to calculate the asymptotic covariance matrix. Therefore in the sequel \textit{for applications we are going to use this simplified expression}.

\section{Design of Experiments Construction}
\label{sec: PCEOptimalDoE}

\subsection{Preliminary Considerations}

Taking into account the results of Theorem~\ref{th:asymptotic_theorem}, the limiting covariance matrix of the indices estimates depends on
\begin{enumerate}
\item Noise variance $ \sigma ^ 2 $,
\item True values of PC coefficients $\cb$, defining $B$,
\item Experimental design $X$, defining $\Sigma$.
\end{enumerate}

If we have a sufficiently accurate approximation of the original model, then in the above assumptions, asymptotic covariance in \eqref{asymptoticLab} provides a \textit{theoretically motivated functional to characterize the quality of the experimental design}. Indeed, generally speaking the smaller the norm of the covariance matrix $\|\sigma ^ 2 B \Sigma ^{- 1} B ^ T\|$, the better the estimation of the sensitivity indices apparently should be. Theoretically, we could use this formula for constructing an experimental design that is effective for calculating Sobol' indices: we could select designs that minimize the norm of the covariance matrix. However, there are some problems when proceeding this way:
\begin{itemize}
\item The first one relates to selecting some specific functional for minimization. Informally speaking, we need to choose ``the norm'' associated with the limiting covariance matrix;

\item The second one refers to the fact that we do not know true values of the PC model coefficients, defining $B$; therefore, we will not be able to accurately evaluate the quality of the design.

\end{itemize}

The first problem can be solved in different ways. A~number of statistical criteria for design optimality ($D$-, $I$-optimality and others, see~\cite{Chaloner95bayesianexperimental}) are known. Similar to the work \cite{BurnaevPanin15}, we use the $D$-optimality criterion, as it a provides computationally efficient procedure for design construction. $D$-optimal experimental design minimizes the determinant of the limiting covariance matrix. If the vector of the estimated parameters is normally distributed then $D$-optimal design allows to minimize the volume of the confidence region for this vector.

The second problem is more complex. The optimal design for estimating sensitivity indices that minimizes the norm of limiting covariance matrix depends on true values of the indices, so it can be constructed only if these true values are known. However, in this case design construction makes no sense.

The dependency of the optimal design for indices evaluation on the true model parameters is a consequence of the indices estimates nonlinearity w.r.t. the PC model coefficients. In order to underline this dependency, the term ``{\it locally} $D$-optimal design'' is commonly used \cite{PronzatoAdDoptimal2010}.  In this setting there are several approaches, which are usually associated with either some assumptions about the unknown parameters, or adaptive design construction (see \cite{PronzatoAdDoptimal2010}). We use the latter approach.


In the case of adaptive designs, new design points are generated sequentially based on current estimates of the unknown parameters. This allows to avoid prior assumptions on these parameters. However, this approach has a problem with a confidence of the solution found: if at some step of the design construction process parameters estimates are significantly different from their true values, then the design, which is constructed based on these estimates, may lead to new parameters estimates, which are even more different from the true values. 


In practice, during the construction of adaptive design, the quality of the approximation model and assumptions on non-degeneracy of results can be checked at each iteration and one can control and adjust the adaptive strategy.

\subsection{Adaptive DoE Algorithm}

In this section, we introduce the adaptive algorithm for constructing a design of experiments that is effective to estimate sensitivity indices based on the asymptotic $D$-optimality criterion (see description of Algorithm \ref{alg: designBuilding} and its scheme in Figure~\ref{pic:AlgScheme}). As it was discussed, the main idea of the algorithm is to minimize the confidence region for indices estimates. At each iteration, we replace the limiting covariance matrix by its approximation based on the current PC coefficients estimates.

\begin{figure}[h!]
\centering
\includegraphics[scale = 0.20]{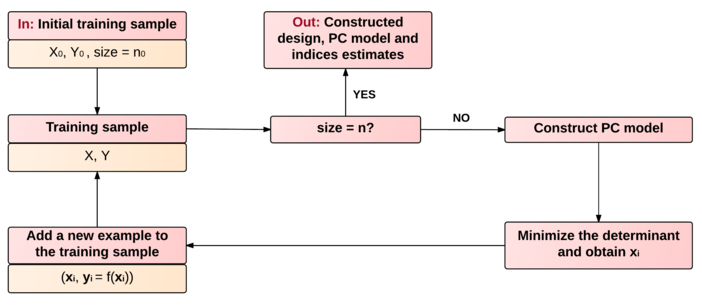}
\caption{\label{pic:AlgScheme} Adaptive algorithm for constructing an effective experimental design to evaluate PC-based Sobol' indices}
\end{figure}

As for initialization, we assume that there is some {\it initial design}, and we require that this initial design is non-degenerate, {\it i.e.} such that the initial information matrix $A_0$ is nonsingular ($\det A_0 \neq 0$). In addition, at each iteration the non-degeneracy of the matrix $B_i A_i^{-1} B_i^T$, related to the criterion to be minimized, is checked.

\begin{algorithm}
\begin{description}
\item [{Goal:}] Construct an effective experimental design for the calculation of sensitivity indices
\item [{Parameters:}] initial and final numbers of points $n_0$ and $n$ in the design; set of candidate design points $\Xi$.
\item [{Initialization:}] \end{description}
\begin{itemize}
\item  initial training sample $\{X_0, Y_0\}$ of size $n_0$, where design $ X_0 = \{\xb_i \}_{i = 1} ^{n_0} \subset \Xi $ defines a non-degenerate information matrix $ A_0 = \sum_{i = 1} ^{n_0}{\boldsymbol \Psi (\xb_i) \boldsymbol \Psi ^ T (\xb_i)} $;
\item $ B_0 = B(\hat{\cb}_0) $, obtained using the initial estimates of the PC model coefficients, see (\ref{eq: derivB}), (\ref{eq: derivB_terms}), (\ref{eq: derivB_terms_form2});
\end{itemize}
\begin{description}
\item [{Iterations:}] for all $ i $ from 1 to $n - n_0 $: \end{description}
\begin{itemize}
\item Solve optimization problem
\begin{equation}
\label{opt_prob}
 \xb_{i} = {\arg\min}_{\xb \in \Xi} \: \det \left [B_{i-1} (A_{i-1} + \boldsymbol \Psi (\xb) \boldsymbol \Psi ^ T (\xb)) ^{- 1} B_{i-1} ^ T \right] 
\end{equation}

\item $ A_{i} = A_{i-1} + \boldsymbol \Psi (\xb_{i}) \boldsymbol \Psi ^{T} (\xb_{i}) $

\item Add the new sample point $(\xb_{i}, y_i = f(\xb_{i}))$ to the training sample and update current estimates $\hat{\cb}_i$ of the PCE model coefficients

\item Calculate $ B_i = B(\hat{\cb}_i) $

\end{itemize}
\begin{description}
\item [{Output:}] The design of experiments $ X = X_0 \cup X_{add} $, where $ X_{add} = \{\xb_k \}_{k = 1}^{n-n_0} $, $Y = f(X)$ \end{description}
\caption{Description of the Adaptive DoE algorithm}
\label{alg: designBuilding}
\end{algorithm}

\subsection{Details of the Optimization procedure}
\label{subsec: PCEOptimalDoEformula}

The idea behind the proposed optimization procedure is analogous to the idea of the Fedorov's algorithm for constructing optimal designs \cite{FedorovAlgDOptimal1994}. In order to simplify optimization problem \eqref{opt_prob}, we use two well-known identities:

\begin{itemize}
\item Let $M$ be some nonsingular square matrix, $ \tb $ and $ \wb $ be vectors such that $ 1 + \wb ^ T M^{-1} \tb \neq 0 $, then
\begin{equation}
\label{eq: PCEInv}
(M + \tb \wb ^ T) ^{- 1} = M ^{- 1} - \frac{M ^{- 1} \tb \wb ^ TM ^{- 1}}{1 + \wb ^ TM ^{- 1} \tb}.
\end{equation}

\item Let $M$ be some nonsingular square matrix, $ \tb $ and $ \wb $ be vectors of appropriate dimensions, then
\begin{equation}
\label{eq: PCEDet}
\det (M + \tb \wb ^ T) = \det (M) \cdot (1 + \wb ^ TM ^{- 1} \tb).
\end{equation}

\end{itemize}

Let us define $ D \triangleq B (A + \phx \phTx) ^{- 1} B ^ T $, then applying (\ref{eq: PCEInv}), we obtain
\[
\det (D) = \det \left [B A^{- 1} B ^ T - \frac{B A^{- 1} \phx \phTx A^{- 1} B ^ T}{1 + \phTx A ^{- 1} \phx} \right] \triangleq 
\]
\[
\triangleq \det \left [M - \tb \wb ^ T \right],
\]
where $ M \triangleq B A ^{- 1} B ^ T $, $ \tb \triangleq \frac{B A ^{- 1} \phx}{1 + \phTx A ^{- 1} \phx} $, $ \wb \triangleq B A ^{- 1} \phx $. Assuming that matrix $M$ is nonsingular and applying (\ref{eq: PCEDet}), we obtain
\[
\det (D) = \det (M) \cdot (1 - \wb ^ TM ^{- 1} \tb) \to \min.
\]
The resulting optimization problem is
\[
\wb ^ T M ^{- 1} \tb \to \max, \; \; \; \;
\]
or explicitly \eqref{opt_prob} is reduced to
\[
  \frac{(\phTx A ^{- 1}) B ^ T (B A ^{- 1} B ^ T) ^{- 1} B (A ^{- 1} \phx)}{1 + \phTx A ^{- 1} \phx} \to \max_{\xb \in \Xi}.
\]

\section{Benchmark}
\label{sec: SAPCEresults}

In this section, we validate the proposed algorithm on a set of computational models with different input dimensions. Several  analytic problems and two industrial problems based on finite element models are considered. Input parameters (variables) of the considered models have independent uniform and independent normal distributions. For some models, additionally independent gaussian noise is added to their outputs.

At first, we form non-degenerate random initial design, and then we use various techniques to add new design points iteratively. We compare our method for design construction (denoted as {\bf Adaptive for SI}) with the following methods:

\begin{itemize}

\item {\bf Random} method iteratively adds new design points randomly from the set of candidate design points $\Xi$;

\item {\bf Adaptive D-opt} iteratively adds new design points that maximize the determinant of information matrix (\ref{eq:information_matrix}): $\det A_n \to \max_{\xb_n \in \Xi} $ (\cite{FedorovAlgDOptimal1994}). The resulting design is optimal, in some sense,  for estimation of the PCE model coefficients. We compare our method with this approach to prove that it gives some advantage over usual $D$-optimality. Strictly speaking, $D$-optimal design is not iterative but if we have an initial training sample then the sequential approach seems a natural generalization of a common $D$-optimal designs.

\item {\bf LHS}. Unlike other considered  designs, this method is not iterative as a completely new design is generated at each step. This method uses Latin Hypercube Sampling, and it is common to compute PCE coefficients.
\end{itemize}

The metric of design quality is {\bf the mean error} defined as the distance between estimated and true indices $\sqrt{\sum_{i=1}^d (S_i - \hat{S}_i^{\text{\:run}})^2}$ averaged over runs with different random initial designs ($200$-$400$ runs). We consider not only the {\it mean} error but also its {\it variance}. Particularly, we use Welch's t-test (see \cite{Welch1947}) to ensure that the difference of mean distances is statistically significant for the considered methods. Note that lower p-values correspond to bigger confidence.

In all cases, we assume that the truncation set (retained PCE terms) is selected before an experiment.

\subsection{Analytic Functions}

\paragraph{The Sobol' function} is commonly used for benchmarking methods in global sensitivity analysis

\[
f(\xb) = \prod_{i=1}^d \frac{|4x_i-2|+c_i}{1+c_i},
\]
where $x_i \sim \dU(0,1)$. In our case parameters $d=3$, $c = (0.0, 1.0, 1.5)$ are used. Independent gaussian noise is added to the output of the function. The standard deviation of noise is $0$ (without noise), $0.2$ and $1.4$ that corresponds to $0\%$, $28\%$ and $194\%$ of the function standard deviation, caused by the inputs uncertainty. Analytical expressions for the corresponding sensitivity indices are available in \cite{Sobol93}.

\paragraph{Ishigami function} is also commonly used for benchmarking of global sensitivity analysis: 

\[
f(\xb) = \sin x_1 + a \sin^2 x_2 + b x_3^4 \sin x_1, \; a=7, \; b=0.1
\]
where $x_i \sim \dU(-\pi, \pi)$. Theoretical values for its sensitivity indices are available in \cite{Marrel2009}.

\paragraph{Environmental function} models a pollutant spill caused by a chemical accident \cite{Bliznyuk08}

\[
f(\xb) = \sqrt{4\pi} C(\xb), \; \xb = (M, \;  d,\; L,\; \tau), 
\]

\[
C(\xb) = \frac{M}{ \sqrt{4\pi D t}} \exp \left(\frac{-s^2}{4 D t} \right) + \frac{M}{\sqrt{4\pi D (t-\tau)}} \exp \left(  
- \frac{(s-L)^2}{4D(t-\tau)}\right) I(\tau<t),
\]
where $I$ is the indicator function; $4$ input variables and their distributions are defined as: 
$M$ $\sim$ $\dU(7, 13)$,	mass of pollutant spilled at each location;
$D$ $\sim$ $\dU(0.02, 0.12)$,	diffusion rate in the channel;
$L$ $\sim$ $\dU(0.01, 3)$,	location of the second spill;
$\tau$ $\sim$ $\dU(30.01, 30.295)$,   	time of the second spill. $C(\xb)$ is the concentration of the pollutant at the space-time vector $(s, \; t)$, where $0 \leq s \leq 3 $ and $t > 0$.  

We consider a cross-section corresponding to $t=40$, $s=1.5$ and suppose that independent gaussian noise $\mathcal{N}(0, 0.5^2)$ is added to the output of the function.

\paragraph{The Borehole function} models water flow through a borehole. It is commonly used for testing different methods in numerical experiments \cite{Worley87,Moon12}
\[
f(\xb) = \frac {2\pi T_u (H_u - H_l)} {\ln(r/r_w)(1+ \frac{2LT_u}{\ln(r/r_w)r_w^2K_w} + T_u/T_l)},
\]
where $8$ input variables and their distributions are defined as: 
$r_w$ $\sim$ $\dU(0.05, 0.15)$, radius of borehole (m);
$T_u$ $\sim$ $\dU(63070, 115600)$, transmissivity of upper aquifer ($m^2$/yr); 
$r$ $\sim$ $\dU(100, 50000)$, radius of influence (m);
$H_u$ $\sim$ $\dU(990, 1110)$,	potentiometric head of upper aquifer (m); 
$T_l$ $\sim$ $\dU(63.1, 116)$,	transmissivity of lower aquifer ($m^2$/yr);
$H_l$ $\sim$ $\dU(700, 820)$,	potentiometric head of lower aquifer (m);
$L$ $\sim$ $\dU(1120, 1680)$,	length of borehole (m);
$K_w$ $\sim$ $\dU(9855, 12045)$, hydraulic conductivity of borehole (m/yr).

Besides the deterministic case, we also consider stochastic one when independent gaussian noise $\mathcal{N}(0, 5.0^2)$ is added to the output of the function.

\paragraph{The Wing Weight function} models weight of an aircraft wing \cite{Forrester08}
\[
f(\xb) = 0.036 S_w^{0.758} W_{fw}^{0.0035}\left(\frac{A}{\cos^2(\Lambda)}\right)^{0.6} q^{0.006} \lambda^{0.04} \left(\frac{100 t_c}{\cos(\Lambda)}\right)^{-0.3} (N_z W_{dg})^{0.49} +
\]
\[
+S_w W_p,
\]
where  $10$ input variables and their distributions are defined as: 
$S_w$ $\sim$ $\dU(150, 200)$, wing area ($ft^2$);
$W_{fw}$ $\sim$ $\dU(220, 300)$, weight of fuel in the wing (lb);
$A$ $\sim$ $\dU(6, 10)$, aspect ratio;
$\Lambda$ $\sim$ $\dU(-10, 10)$, quarter-chord sweep (degrees);
$q$ $\sim$ $\dU(16, 45)$, dynamic pressure at cruise (lb/$ft^2$);
$\lambda$ $\sim$ $\dU(0.5, 1)$, taper ratio;
$t_c$ $\sim$ $\dU(0.08, 0.18)$, aerofoil thickness to chord ratio;
$N_z$ $\sim$ $\dU(2.5, 6)$, ultimate load factor;
$W_{dg}$ $\sim$ $\dU(1700, 2500)$, flight design gross weight (lb);
$W_p$ $\sim$ $\dU(0.025, 0.08)$, paint weight (lb/$ft^2$). 

Besides the deterministic case, we also consider stochastic one when independent gaussian noise $\mathcal{N}(0, 5.0^2)$ is added to the output of the function.

\paragraph{Experimental setup:} In the experiments, we assume that the set of candidate design points $\Xi$ is a uniform grid in the $d$-dimensional hypercube. Note that $\Xi$ affects optimization quality. Experimental settings for analytical functions are summarized in Table~\ref{tab:analyt_func_exp_cond}.

\begin{table}[hbt]
\caption{Benchmark settings for analytical functions}
\label{tab:analyt_func_exp_cond}
\begin{center}
\begin{tabular}{lccccc} 
\hline
Characteristic & Sobol & Ishigami & Environmental & Borehole & WingWeight\\
\hline
Input dimension    & $3$ & $3$ & $4$ & $8$ & $10$\\
Input distributions    & Unif & Unif & Unif   & Unif  & Unif\\
PCE degree & $9$  &   $9$ & $5$ & $4$   &   $4$ \\
$q$-norm      & $0.75$ & $0.75$  & $1$ & $0.75$ & $0.75$\\
Regressors number    & $111$ & $111$   & $126$  & $117$ &   $176$\\
Initial design size   & $150$  & $120$ & $126$ & $117$ & $186$\\
Added noise std   & (0, 0.2, 1.4) & ~--- & $0.5$ & ($0$, $5.0$) & ($0$, $5.0$)\\
\hline
\end{tabular}
\end{center}
\end{table}

\subsection{Finite Element Models}

\paragraph{Case 1: Truss model.} The deterministic computational model, originating from \cite{Lee06}, resembles the displacement $V_1$ of a truss structure with $23$ members as shown in Figure~\ref{pic:Truss}. 

\begin{figure}
\centering
\includegraphics[scale = 0.30]{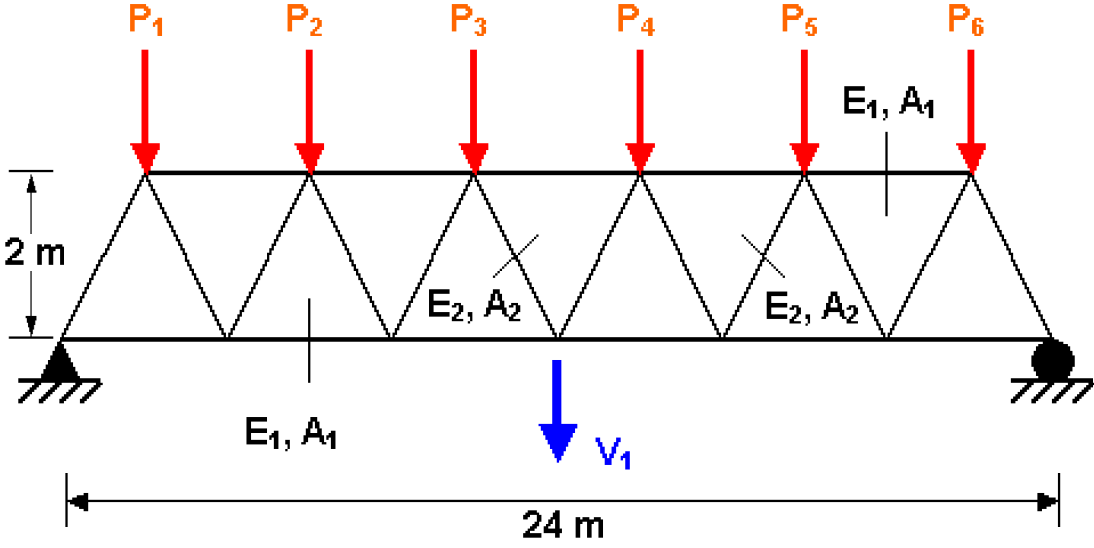}
\caption{\label{pic:Truss} Truss structure with $23$ members}
\end{figure}

Ten random variables are considered: 
\begin{itemize}
\item $E_1$, $E_2$ (Pa) $\sim$ $\dU(1.68 \times 10^{11}, \; 2.52 \times 10^{11})$;
\item $A_1$ ($m^2$) $\sim$ $\dU(1.6 \times 10^{-3}, \; 2.4 \times 10^{-3})$; 
\item $A_2$ ($m^2$) $\sim$ $\dU(0.8 \times 10^{-3}, \; 1.2 \times 10^{-3})$; 
\item $P_1$ - $P_6$ (N) $\sim$ $\dU(3.5 \times 10^{4}, \; 6.5 \times 10^{4})$.
\end{itemize}
   
It is assumed that all the horizontal elements have perfectly correlated Young's modulus and cros-sectional areas with each other and so is the case with the diagonal members.

\paragraph{Case 2: Heat transfer model.} We consider the two-dimensional stationary heat diffusion problem described in~\cite{Konakli15}.
The problem is defined on the square domain $D = (-0.5, 0.5) \times (-0.5, 0.5)$ shown in Figure~\ref{pic:Heat}, where the temperature field $T(z), z \in D$ is described by the partial differential equation:
\[
-\nabla(\kappa(\mathbf{z})\nabla T(z)) = 500 I_A(z),
\]
with boundary conditions $T = 0$ on the top boundary and $\nabla T \mathbf{n} = 0$ on the left, right and bottom boundaries, where $\mathbf{n}$ denotes the vector normal to the boundary; $A = (0.2, 0.3)\times(0.2, 0.3)$ is a square domain within $D$ and $I_A$ is the indicator function of $A$. The diffusion coefficient, $\kappa(z)$, is a lognormal random field defined by
\[
\kappa(z) = \exp[a_k + b_k g(z)],
\]
where $g(z)$ is a standard Gaussian random field and the parameters $a_k$ and $b_k$ are such that the mean and standard deviation of $\kappa$ are $\mu_{\kappa}= 1$ and $\sigma_\kappa= 0.3$, respectively. The random field $g(z)$ is characterized by an autocorrelation function $\rho(z, z^\prime) = \exp(-\|z-z^\prime\|^2/0.2^2)$. The quantity of interest, $Y$, is the average temperature in the square domain $B = (-0.3,-0.2)\times (-0.3,-0.2)$ within
$D$ (see Figure~\ref{pic:Heat}).

To facilitate solution of the problem, the random field $g(z)$ is represented using the Expansion Optimal Linear Estimation (EOLE) method (see \cite{Li93}). By truncating the EOLE series after the first $M$ terms, $g(z)$ is approximated by 
\[
\hat{g}(z) = \sum_{i=1}^{M} \frac{\xi_i}{\sqrt{\ell_i}} \phi_i^T \mathbf{C}_{z\zeta}.
\]
In the above equation, $\{\xi_1, \ldots, \xi_M\}$ are independent standard normal variables; $\mathbf{C}_{z\zeta}$ is a vector with elements $\mathbf{C}_{z\zeta}^{(k)} = \rho(z, \zeta_k)$, where $\{\zeta_1, \ldots, \zeta_M\}$ are the points of an appropriately defined mesh
in $D$; and $(\ell_i, \phi_i)$ are the eigenvalues and eigenvectors of the correlation matrix $\mathbf{C}_{\zeta\zeta}$ with elements $\mathbf{C}_{\zeta\zeta}^{(k,\ell)} = \rho(\zeta_k, \zeta_{\ell})$, where $k, \ell = 1, \ldots, n$. We select $M = 53$ in order to satisfy inequality
\[
\sum_{i=1}^M \ell_i / \sum_{i=1}^n \ell_i \geq 0.99.
\]

The underlying deterministic problem is solved with an in-house finite-element analysis code. The employed finite-element discretization with triangular $T3$ elements is
shown in Figure~\ref{pic:Heat}. Figure~\ref{pic:Heat1} shows the temperature fields corresponding to two example realizations of the diffusion coefficient.

%

\begin{figure}
    \centering
    \begin{subfigure}[b]{0.3\textwidth}
    \centering
        \includegraphics[scale = 0.2]{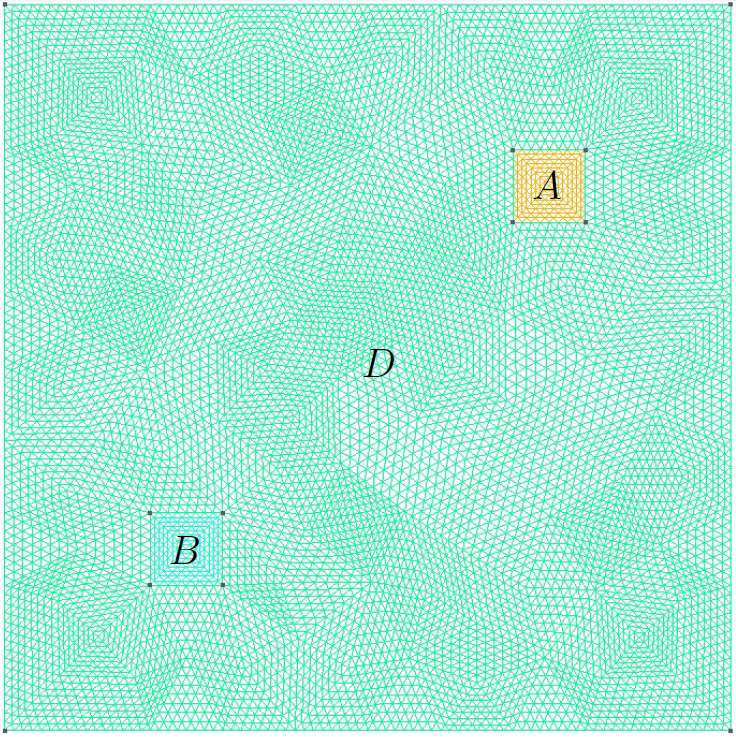}
        \caption{Finite-element mesh}
        \label{pic:Heat}
    \end{subfigure}
    ~ 
    \begin{subfigure}[b]{0.6\textwidth}
        \includegraphics[width=\textwidth]{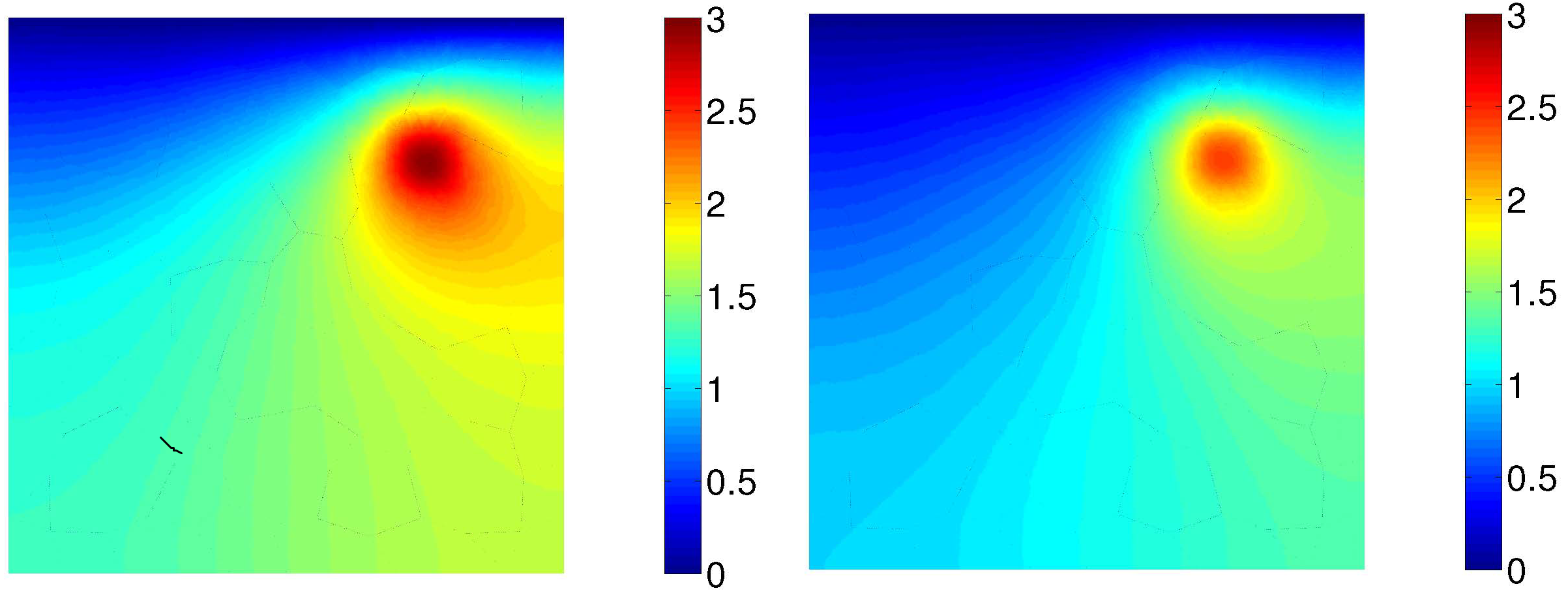}
        \caption{Temperature field realizations}
        \label{pic:Heat1}
    \end{subfigure}
    ~ 
    \caption{Heat diffusion problem}
\end{figure}

\paragraph{Experimental Setup.} For these finite element models, we assume that the set of candidate design points $\Xi$ is 
\begin{itemize}
\item a uniform grid in the $10$-dimensional hypercube for the Truss model;
\item LHS design with normally distributed variables in $53$-dimensional space for the Heat transfer model.
\end{itemize}

Experimental settings for all models are summarized in Table~\ref{tab:func_exp_cond}.

%
%


\begin{table}[hbt]
\caption{Benchmark settings for Finite Element models}
\label{tab:func_exp_cond}
\begin{center}
\begin{tabular}{lcc} 
\hline
Characteristic & Truss & Heat transfer\\
\hline
Input dimension     & $10$ & $53$\\
Input distributions   & Unif & Norm \\
PCE degree  & $4$ & $2$ \\
$q$-norm     & $0.75$ & $0.75$\\
Regressors number   & $176$ &   $107$   \\
Initial design size   & $176$ & $108$\\
Added noise std   & ~--- & ~--- \\
\hline
\end{tabular}
\end{center}
\end{table}

\subsection{Results}

Figures~\ref{pic:Sobol3}, \ref{pic:Sobol3_noise0-2}, \ref{pic:Sobol3_noise1-4},  \ref{pic:Ishigami3},  \ref{pic:Environmental4},  \ref{pic:Borehole8},  \ref{pic:Borehole8_noise5}, \ref{pic:WingWeight10}, \ref{pic:WingWeight10_noise5}   show results for analytic functions. Figures \ref{pic:Truss10Uniform} and \ref{pic:HeatProblem53} present results for finite element models. We provide here mean errors, relative mean errors {\it w.r.t} the proposed method and $p$-values to ensure that the difference of mean errors is statistically significant.

In the presented experiments, the proposed method performs better than other considered methods in terms of the mean error of estimated indices. Particularly note its superiority over standard LHS approach that is commonly used in practice. The difference in mean errors is statistically significant according to Welch's t-test.

Comparison of figures \ref{pic:Sobol3}, \ref{pic:Sobol3_noise0-2}, \ref{pic:Sobol3_noise1-4} with different levels of additive noise shows that the proposed method is effective when the analyzed function is deterministic or when the noise level is {\it small}.

Because of robust problem statement and limited accuracy of the optimization, the algorithm may produce duplicate design points. Actually, it's a common situation for locally $D$-optimal designs \cite{PronzatoAdDoptimal2010}. If the computational model is deterministic, one may modify the algorithm, {\it e.g.} exclude repeated design points.

Although high dimensional optimization problems may be computationally prohibitive, the proposed approach is still useful in high dimensional settings. We propose to generate a uniform candidate set ({\it e.g.} LHS design of large size) and then choose its subset for the effective calculation of Sobol' indices using our adaptive method, see results for Heat transfer model in Figure \ref{pic:HeatProblem53} (note that due to computational complexity we provide for this model results only for $2$ iterations of the LHS method).

It should be noted that in all presented cases the specification of sufficiently accurate PCE model (reasonable values for degree $p$ and $q$-norm defining the truncation set) is assumed to be known {\it a priori} and the size of the initial training sample is sufficiently large. If we use an inadequate specification of the PCE model ({\it e.g.} quadratic PCE in case of cubic analyzed function), the method will perform worse in comparison with methods which do not depend on PCE model structure. In any case, usage of inadequate PCE models may lead to inaccurate results. That is why it is very important to control PCE model error during the design construction. For example, one may use {\it cross-validation} for this purpose \cite{ElstatHastie2009}. Thus, if the PCE model error increases during design construction this may indicate that the model specification is inadequate and should be changed. 

\begin{figure}
    \centering
    \begin{subfigure}[b]{\textwidth}
        \includegraphics[width=\textwidth]{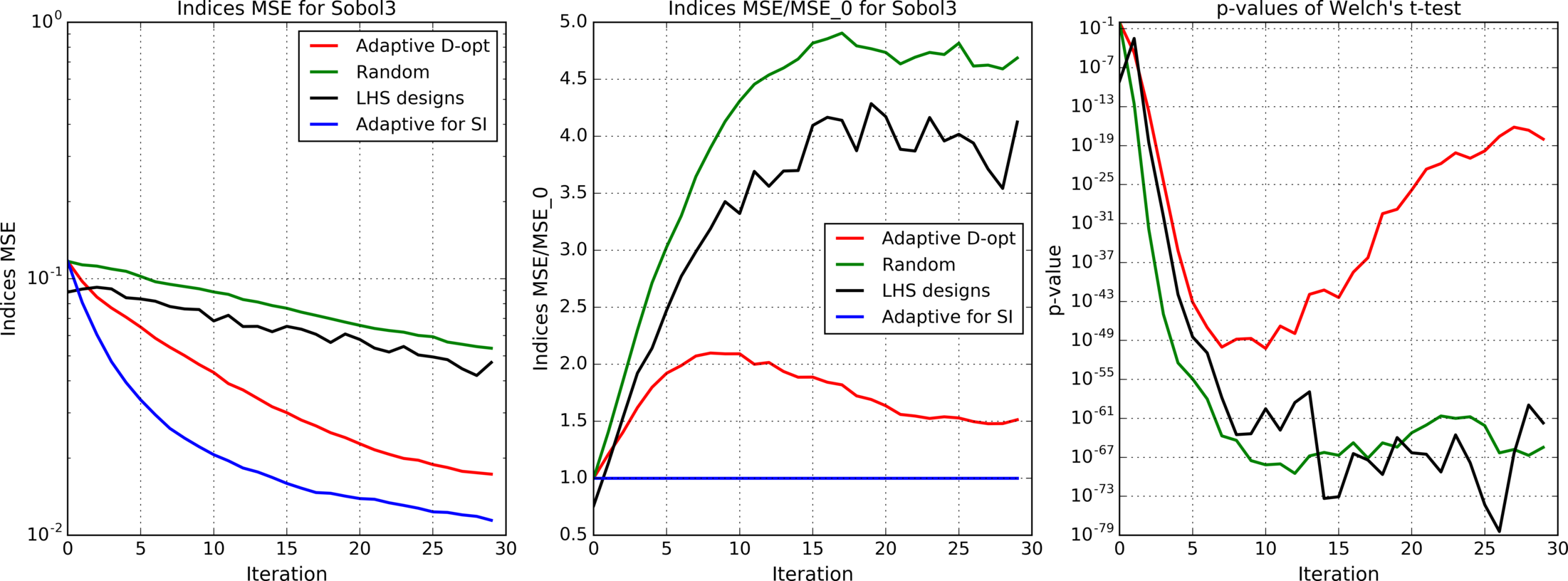}
        \caption{Noise std: 0}
        \label{pic:Sobol3}
    \end{subfigure}
    ~ 
    \begin{subfigure}[b]{\textwidth}
        \includegraphics[width=\textwidth]{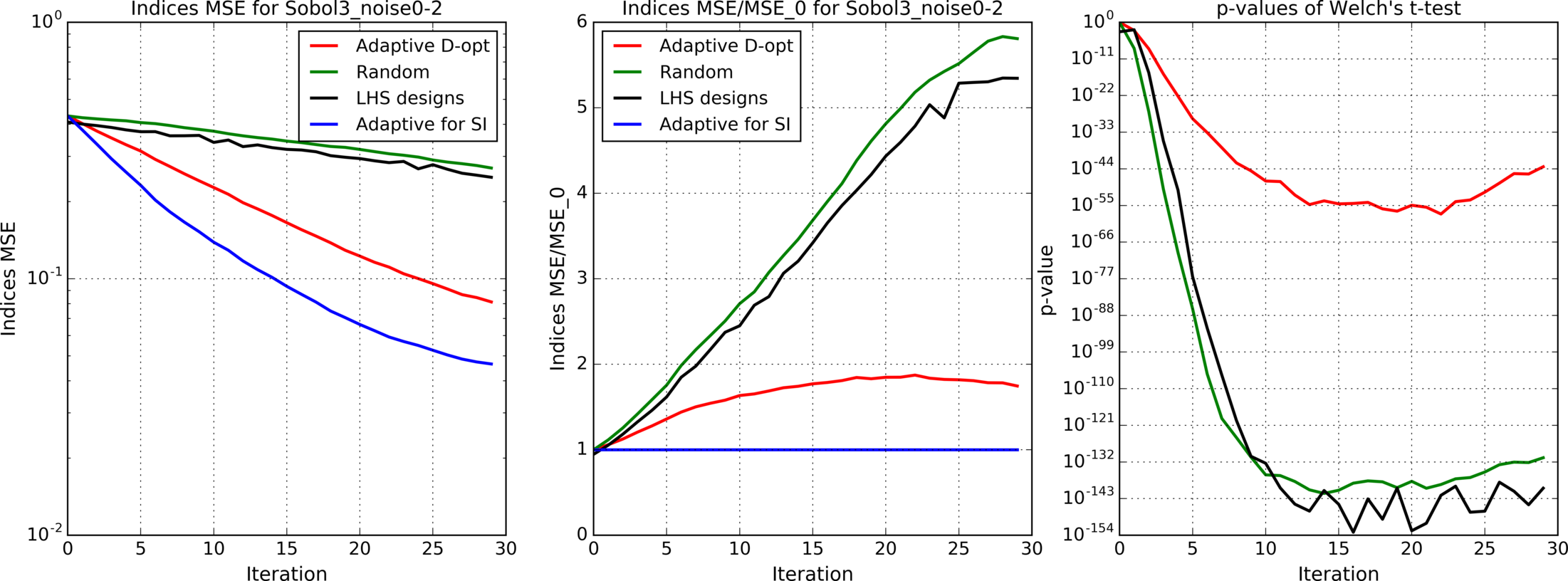}
        \caption{Noise std: 0.2}
        \label{pic:Sobol3_noise0-2}
    \end{subfigure}
    ~ 
    \begin{subfigure}[b]{\textwidth}
        \includegraphics[width=\textwidth]{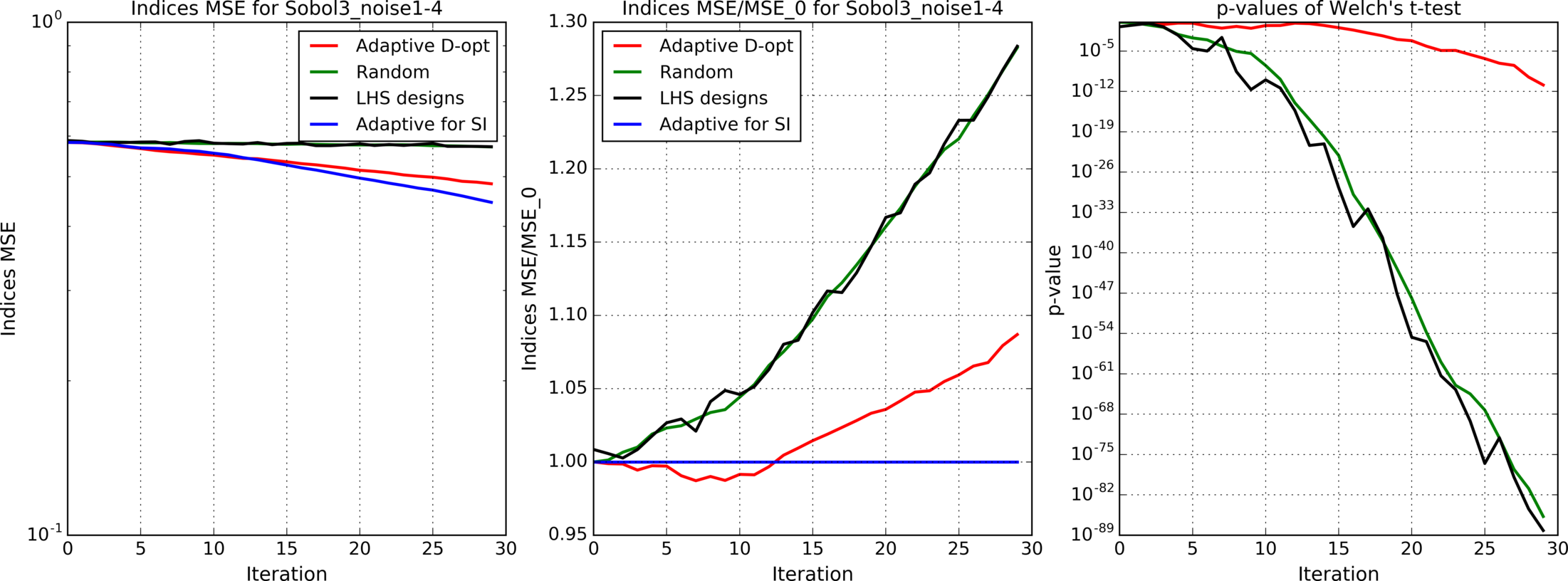}
        \caption{Noise std: 1.4}
        \label{pic:Sobol3_noise1-4}
    \end{subfigure}
    \caption{Sobol function. $3$-dimensional input.}
\end{figure}

\begin{figure}
\centering
\includegraphics[width=\textwidth]{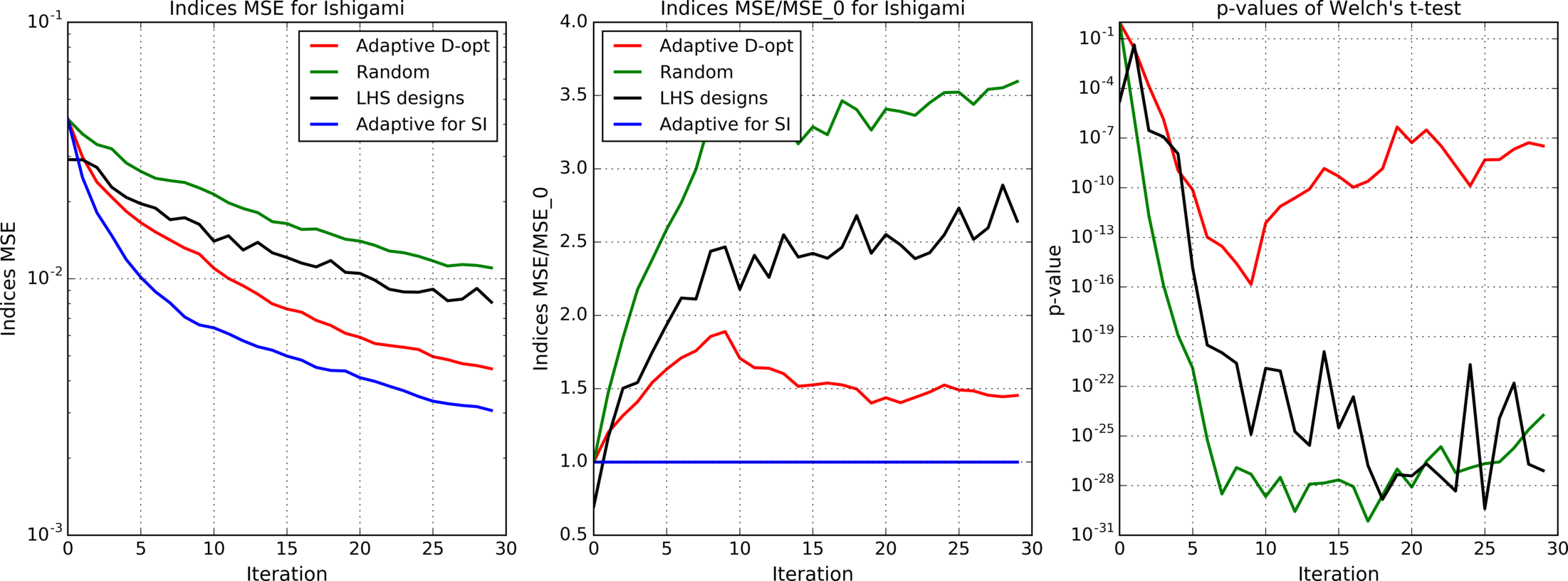}
\caption{\label{pic:Ishigami3} Ishigami function. $3$-dimensional input.}
\end{figure}

\begin{figure}
\centering
\includegraphics[width=\textwidth]{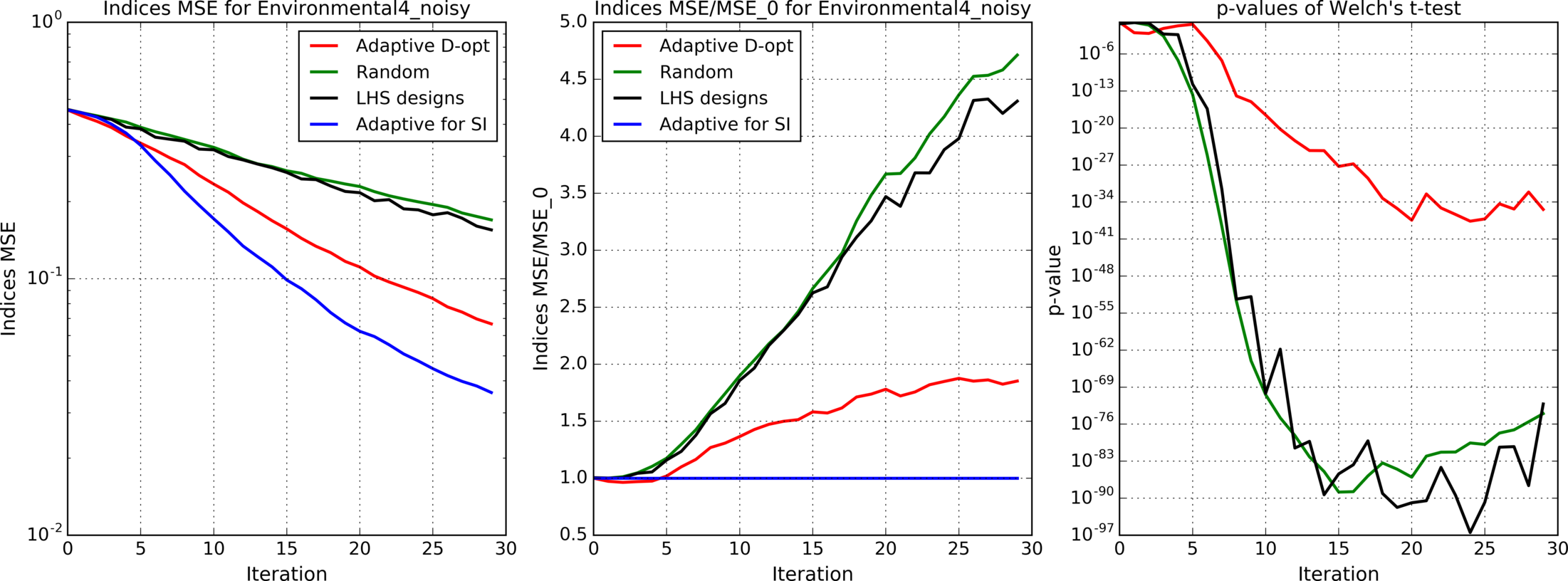}
\caption{\label{pic:Environmental4} Environmental function. $4$-dimensional input.}
\end{figure}

\begin{figure}
    \centering
    \begin{subfigure}[b]{\textwidth}
        \includegraphics[width=\textwidth]{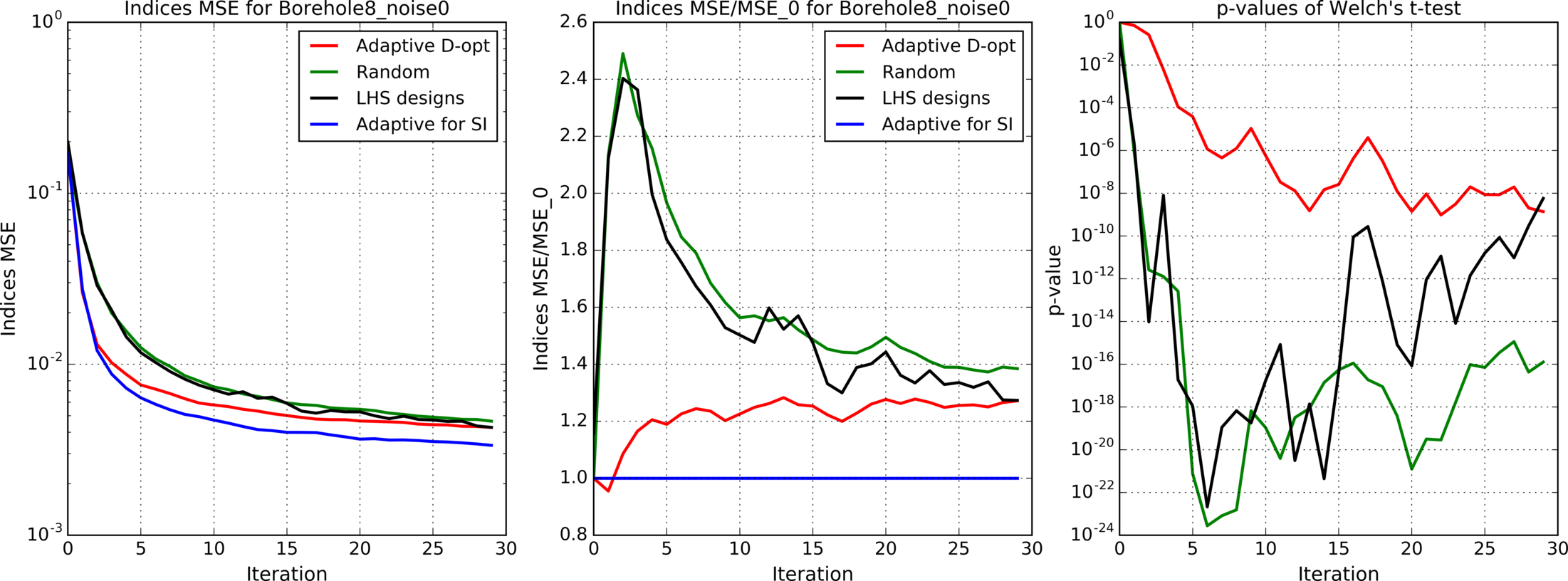}
        \caption{Noise std: 0}
        \label{pic:Borehole8}
    \end{subfigure}

    \begin{subfigure}[b]{\textwidth}
        \includegraphics[width=\textwidth]{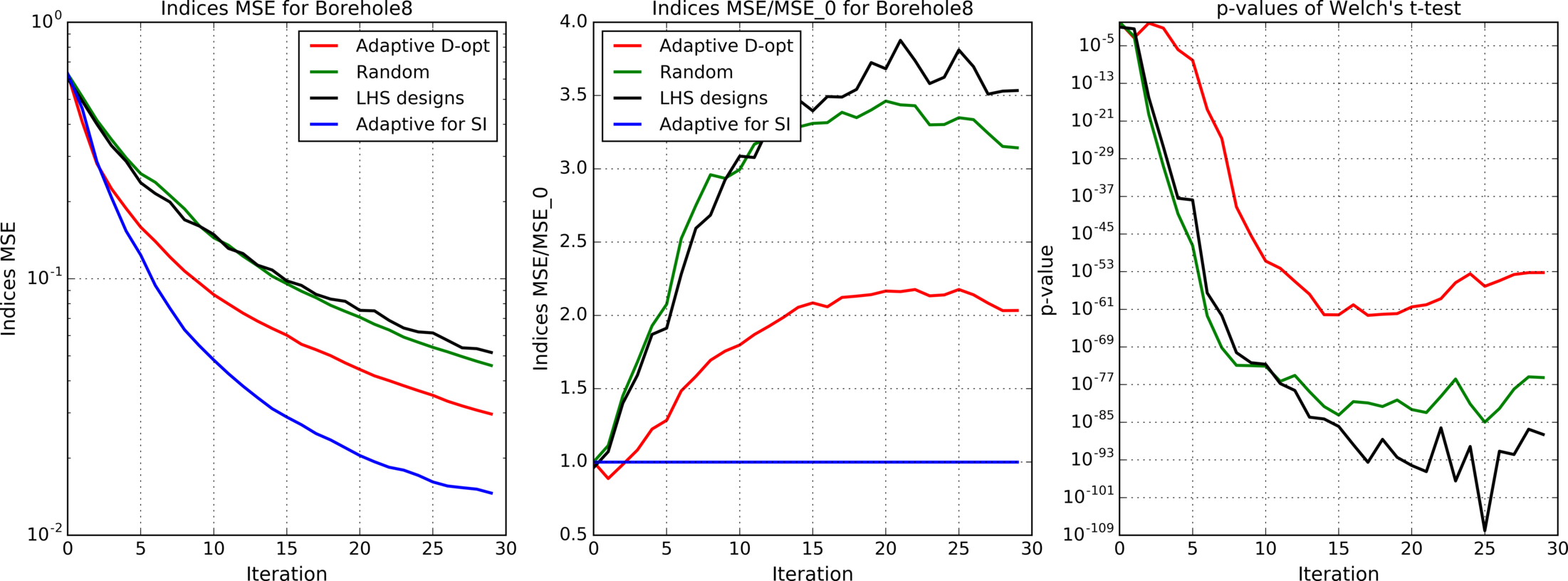}
        \caption{Noise std: 5}
        \label{pic:Borehole8_noise5}
    \end{subfigure}
    \caption{Borehole function. $8$-dimensional input}
\end{figure}

\begin{figure}
    \centering
    \begin{subfigure}[b]{\textwidth}
        \includegraphics[width=\textwidth]{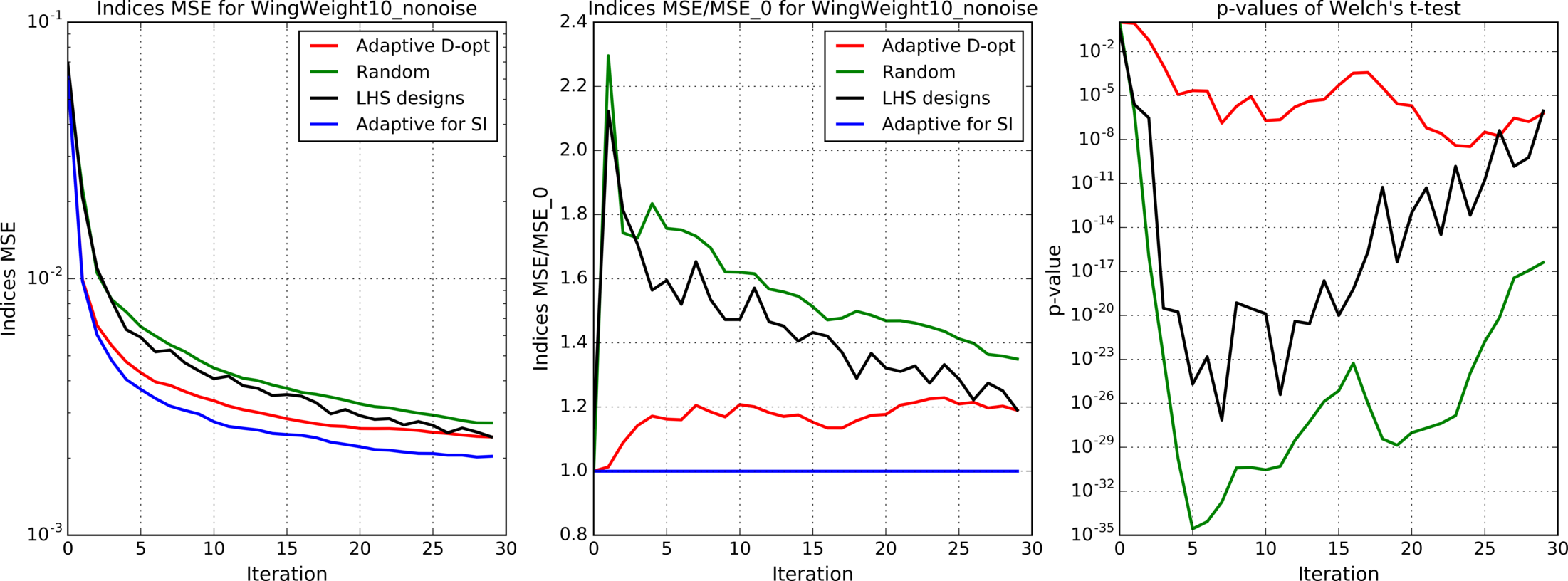}
        \caption{Noise std: 0}
        \label{pic:WingWeight10}
    \end{subfigure}

    \begin{subfigure}[b]{\textwidth}
        \includegraphics[width=\textwidth]{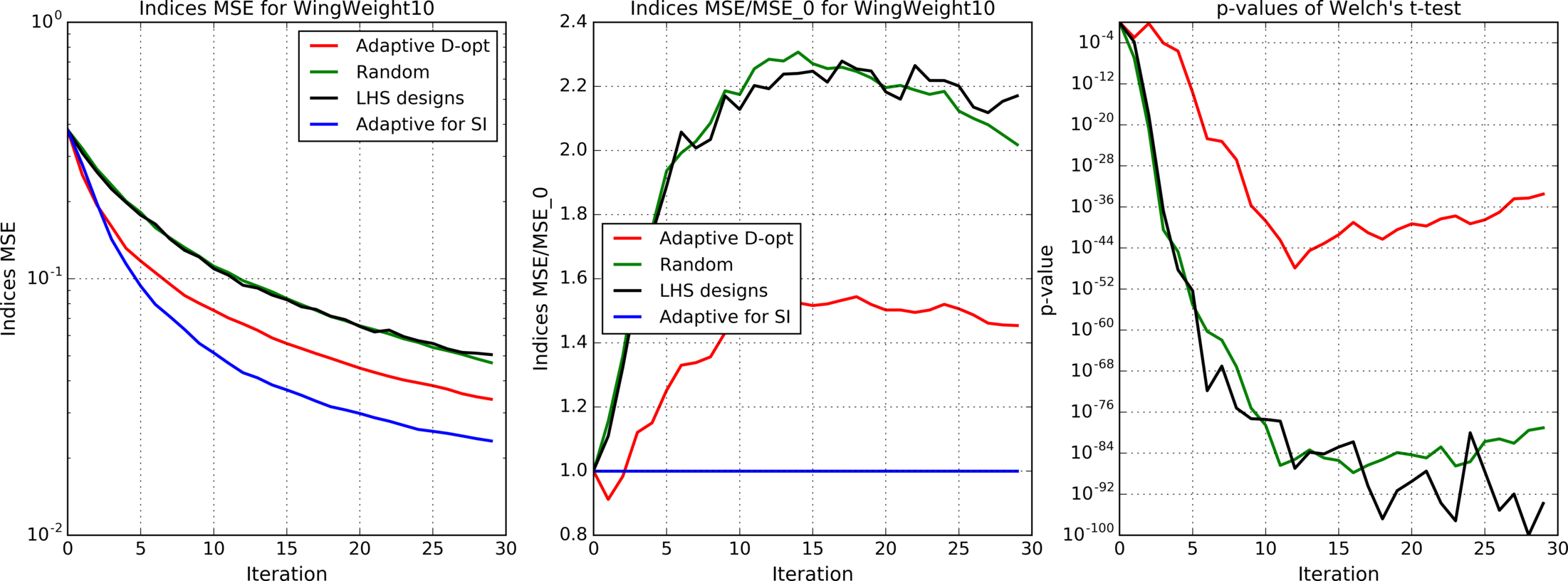}
        \caption{Noise std: 5}
        \label{pic:WingWeight10_noise5}
    \end{subfigure}
    \caption{ WingWeight function. $10$-dimensional input}
\end{figure}


\begin{figure}
\centering
\includegraphics[width=\textwidth]{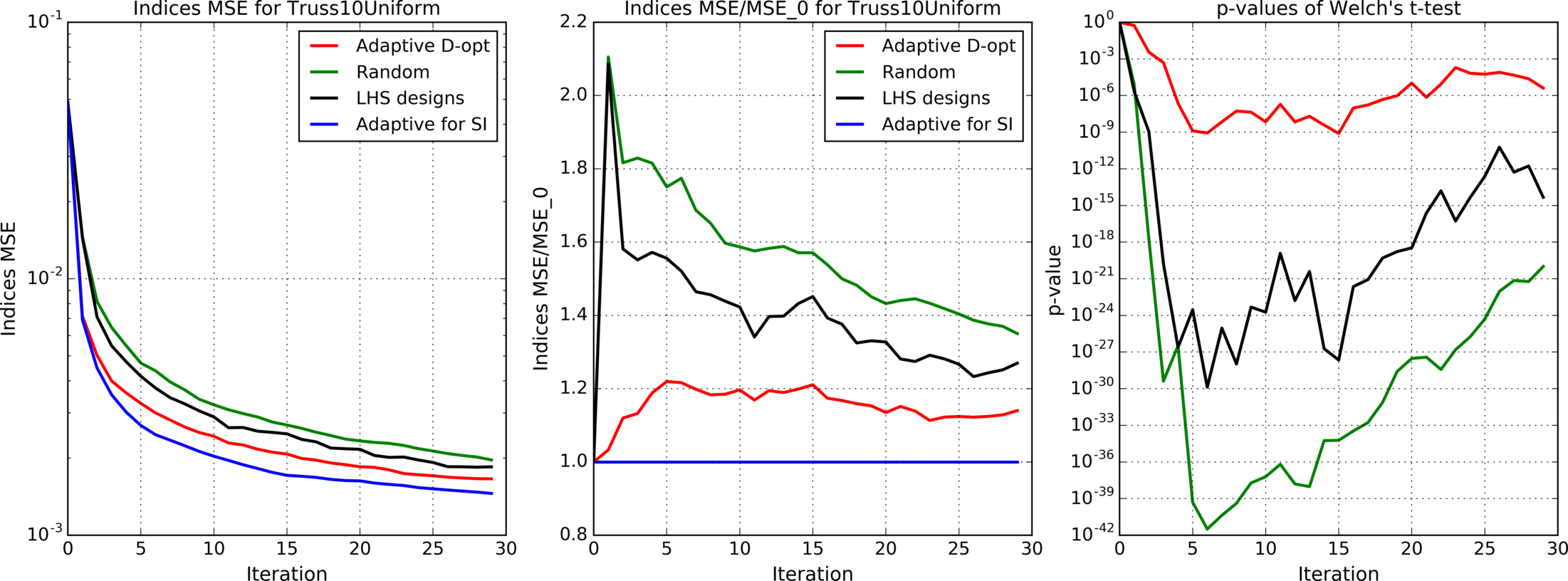}
\caption{\label{pic:Truss10Uniform} Truss model. $10$-dimensional input}
\end{figure}

\begin{figure}
\centering
\includegraphics[width=\textwidth]{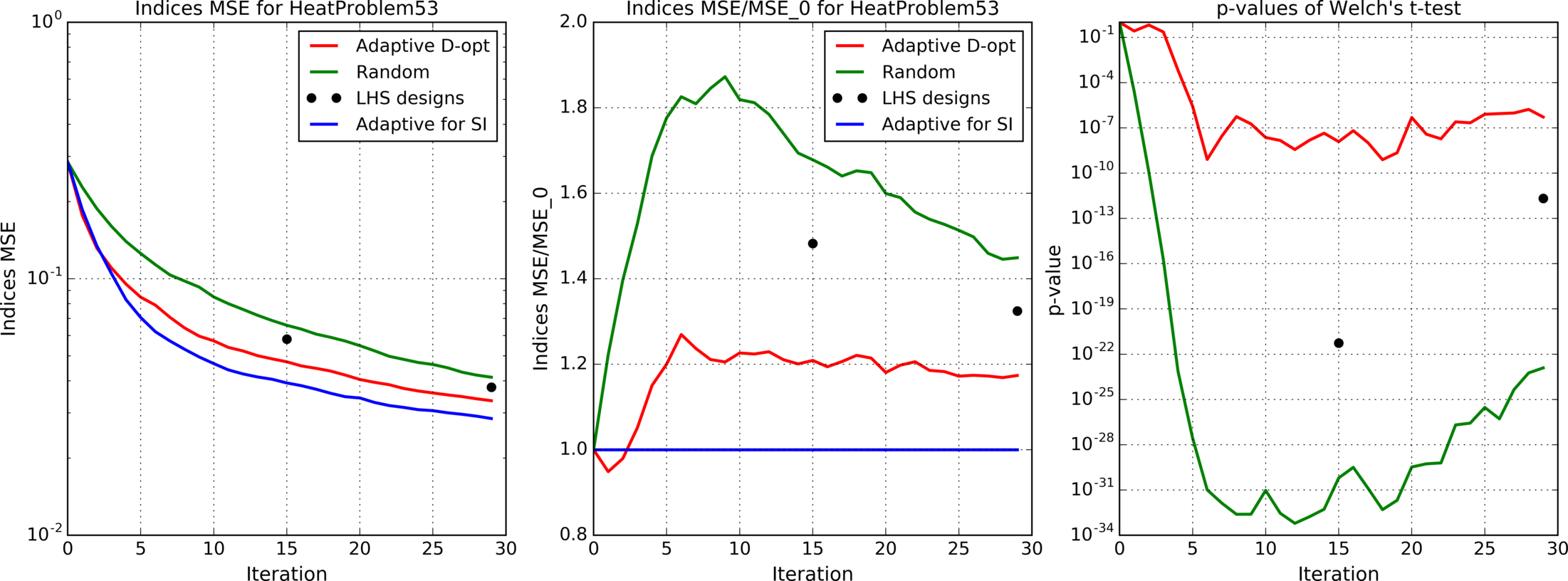}
\caption{\label{pic:HeatProblem53} Heat transfer model. $53$-dimensional input}
\end{figure}

\section{Conclusions}

We proposed the design of experiments algorithm for evaluation of Sobol' indices from PCE metamodel. The method does not depend on a particular form of orthonormal polynomials in PCE. It can be used for the case of different distributions of input parameters, defining the analyzed computational models.

The main idea of the method comes from metamodeling approach. We assume that the computational model is close to its approximating PCE metamodel and exploit knowledge of a metamodel structure. This allows us to improve the evaluation accuracy. All comes with a price: if additional assumptions on the computational model to provide good performance are not satisfied, one may expect accuracy degradation. Fortunately, in practice, we can control approximation quality during design construction and detect that we have selected inappropriate model. Note that from a theoretical point of view, our asymptotic considerations (w.r.t. the training sample size) simplify the problem of accuracy evaluation for the estimated indices.

Our experiments demonstrate: if PCE specification defined by the truncation scheme is appropriate for the given computational model and the size of the training sample is sufficiently large, then the proposed method performs better in comparison with standard approaches for design construction. 
  
\phantomsection
\addcontentsline{toc}{section}{References}

\end{document}